\newtheorem{proposition}{Proposition}
\newcommand{\sfL}{\mathsf{L}}
\newcommand{\calA}{\mathcal{A}}
\newcommand{\LL}[2]{ {\sfL_{#1}^{(#2)}} }
\newcommand{\uu}[2]{ {u_{#1}^{(#2)}} }
\begin{document}
%
\title{A Low-Complexity Improved Successive Cancellation Decoder for Polar Codes}
%
%
%

\author{\authorblockA{Telecommunications Circuits Laboratory, EPFL, Lausanne, Switzerland.}}%
\author{\authorblockN{Orion Afisiadis, Alexios Balatsoukas-Stimming, and Andreas Burg}%
\authorblockA{Telecommunications Circuits Laboratory, \'Ecole Polytechnique F\'ed\'erale de Lausanne, Switzerland.}}%

%
%

\maketitle

\begin{abstract}
Under successive cancellation (SC) decoding, polar codes are inferior to other codes of similar blocklength in terms of frame error rate. While more sophisticated decoding algorithms such as list- or stack-decoding partially mitigate this performance loss, they suffer from an increase in complexity. In this paper, we describe a new flavor of the SC decoder, called the \emph{SC flip} decoder. Our algorithm preserves the low memory requirements of the basic SC decoder and adjusts the required decoding effort to the signal quality. In the waterfall region, its average computational complexity is almost as low as that of the SC decoder.
\end{abstract}


%
\IEEEpeerreviewmaketitle

\section{Introduction}\label{sec:introduction}
Polar codes \cite{Arikan2009} are particularly attractive from a theoretical point of view because they are the first codes that are both highly structured and provably optimal for a wide range of applications (in the sense of optimality that pertains to each application). Moreover, they can be decoded using an elegant, albeit suboptimal, successive cancellation (SC) algorithm, which has computational complexity $O(N \log N)$~\cite{Arikan2009}, where $N = 2^n,~n \in \mathbb{Z},$ is the blocklength of the code, and memory complexity $O(N)$~\cite{Leroux2011}. Even though the SC decoder is suboptimal, it is sufficient to prove that polar codes are capacity achieving in the limit of infinite blocklength.

Unfortunately, the error correcting performance of SC decoding at finite blocklengths is not as good as that of other modern codes, such as LDPC codes. To improve the finite blocklength performance, more sophisticated algorithms, such as \emph{SC list} decoding~\cite{Tal2011} and \emph{SC stack} decoding~\cite{Chen2013}, were introduced recently. These algorithms use SC as the underlying decoder, but improve its performance by exploring multiple paths on a decision tree simultaneously, with each path resulting in one candidate codeword. The computational and memory complexities of SC list decoding are $O(LN\log N)$ and $O(LN),$ respectively, where $L$ is the \emph{list size} parameter, whereas the computational and memory complexities of SC stack decoding are $O(DN\log N)$ and $O(DN),$ respectively, where $D$ is the \emph{stack depth} parameter.

Since an exhaustive search through all paths is prohibitively complex, choosing a suitable strategy for pruning unlikely paths is an important ingredient for low-complexity tree search algorithms. To this end, in \cite{Chen2013}, some path pruning-based methods were proposed in order to reduce the computational complexity of both SC stack and SC list decoding. An alternative approach to reduce the computational complexity of SC list decoding was taken in~\cite{Li2012,Sarkis2014}, where decoding starts with list size $1$, and the list size is increased only when decoding fails (failures are detected using a CRC), up to the maximum list size $L$. Moreover, in~\cite{Cao2013} SC list decoding is employed only for the least reliable bits of the polar code, thus also reducing the computational complexity. However, in~\cite{Cao2013} $L$ distinct paths are still followed in parallel.

Unfortunately, when implementing any decoder in hardware, one always has to provision for the worst case in terms of hardware resources. For the reduced-complexity SC list decoders in \cite{Chen2013,Sarkis2014,Li2012,Cao2013} and the reduced-complexity SC stack decoder in~\cite{Chen2013} this means that $O(LN)$ and $O(DN)$ memory needs to be instantiated, respectively. Moreover, the reduced-complexity list SC and stack SC algorithms also have a significantly higher computational complexity than that of the original SC algorithm. 

\subsubsection*{Contribution} In this paper, we describe a new SC-based decoding algorithm, called \emph{SC flip}, which retains the $O(N)$ memory complexity of the original SC algorithm and has an average computational complexity that is practically $O(N\log N)$ at high SNR, while still providing a significant gain in terms of error correcting performance.

\section{Polar Codes and Successive Cancellation Decoding}\label{sec:polarsc}

\subsection{Construction of Polar Codes}
Let $W$ denote a binary input memoryless channel with input $u~\in~\{0,1\}$, output $y~\in~\mathcal{Y}$, and transition probabilities $W(y|u)$. A polar code is constructed by recursively applying a $2 \times 2$ \emph{channel combining} transformation on $2^n$ independent copies of $W$, followed by a \emph{channel splitting} step~\cite{Arikan2009}. This results in a set of $N = 2^n$ synthetic channels, denoted by $W_n^{(i)}(y_1^N,u_1^{i-1}|u_i),~i=1,\hdots,N$. Let $Z_i \triangleq Z\left(W_n^{(i)}(Y_1^N,U_ 1^{i-1}|U_i)\right),~i=1,\hdots,N$, where $Z(W)$ is the Bhattacharyya parameter of $W$, which can be calculated using various methods (cf. \cite{Arikan2009,Pedarsani2011,Tal2013}). The construction of a polar code of rate $R \triangleq \frac{k}{N},~0 < k < N,$ is completed by choosing the $k$ best synthetic channels (i.e., the synthetic channels with the lowest $Z_i$) as \emph{non-frozen} channels which carry information bits, while \emph{freezing} the input of the remaining channels to some values $u_i$ that are known both to the transmitter and to the receiver. The set of frozen channel indices is denoted by $\mathcal{A}^c$ and the set of non-frozen channel indices is denoted by $\mathcal{A}$. The encoder generates a vector $u_1^N$ by setting $u_{\mathcal{A}^c}$ equal to the known frozen values, while choosing $u_{\mathcal{A}}$ freely. A codeword is obtained as $x_1^N = u_{1}^NG_N,$ where $G_N$ is the generator matrix~\cite{Arikan2009}.


\subsection{Successive Cancellation Decoding}
The SC decoding algorithm~\cite{Arikan2009} starts by computing an estimate of $u_1$, denoted by $\hat{u}_1$, based only on the received values $y_1^N$. Subsequently, $u_2$ is estimated using  $(y_1^N,\hat{u}_1),$ etc. Since $u_i,~i\in \mathcal{A}^c$ are known to the receiver, the real task of SC decoding is to estimate $u_i,~i\in \mathcal{A}$. Let the log-likelihood ratio (LLR) for $W_n^{(i)}(y_1^N,\hat{u}_1^{i-1}|u_i)$ be defined as 
\begin{align}
	L^{(i)}_n(y_1^N,\hat{u}_1^{i-1}|u_i) \triangleq \log \frac{W_n^{(i)}(y_1^N,\hat{u}_1^{i-1}|u_i=0)}{W_n^{(i)}(y_1^N,\hat{u}_1^{i-1}|u_i=1)}. 
\end{align}
Decisions are taken according to
\begin{align}
	\hat{u}_i & =\left\{ \begin{matrix} 0, & L^{(i)}_n(y_1^N,\hat{u}_1^{i-1}|u_i) \geq 0 \text{ and } i \in \mathcal{A}, \\ 1, & L^{(i)}_n(y_1^N,\hat{u}_1^{i-1}|u_i) < 0 \text{ and } i \in \mathcal{A}, \\u_i, & i \in \mathcal{A}^c. \end{matrix} \right. \label{eq:scdec}
\end{align}
The decision LLRs $L^{(i)}_n(y_1^N,\hat{u}_1^{i-1}|u_i)$ can be calculated efficiently through a computation graph which contains two types of nodes, namely $f$ nodes and $g$ nodes. An example of this graph for $N = 8$ is given in Fig.~\ref{fig:scbutterfly}. Both types of nodes have two input LLRs, denoted by $L_1$ and $L_2$, and one output LLR, denoted by $L$. The $g$ nodes have an additional input called the \emph{partial sum}, denoted by $u$. The partial sums form the \emph{decision feedback} part of the SC decoder. The min-sum update rules~\cite{Leroux2011} for the two types of nodes are
\begin{align}
	f(L_1,L_2)						& = \text{sign}(L_1)\text{sign}(L_2)\min \left(|L_1|,|L_2|\right), \\
	g(L_1,L_2,u)	& = (-1)^{u}L_1 + L_2.
\end{align} 
The partial sums at stage $(s-1)$ can be calculated from the partial sums at stage $s,~s \in \{1,\hdots,n\},$ as
\begin{align}
	\uu{s-1}{2i - 1 - [(i-1) \bmod 2^{s-1}]} 	& = \uu{s}{2i-1} \oplus \uu{s}{2i}, \\
	\uu{s-1}{2^{s-1} + 2i-1 - [(i-1) \bmod 2^{s-1}]} 		& =  \uu{s}{2i},
\end{align}
where
\begin{equation}
    \uu{n}{i} \triangleq \hat{u}_i, \qquad \forall i \in \left\{1,\hdots,N\right\}.
\end{equation}

The computation graph contains $N \log (N+1)$ nodes and each node only needs to be activated once. Thus, the computational complexity of SC decoding is $O(N \log N)$. A straightforward implementation of the computation graph in Fig.~\ref{fig:scbutterfly} requires $O(N \log N)$ memory positions. However, by cleverly re-using memory locations, it is possible to reduce the memory complexity to $O(N)$~\cite{Leroux2011}.

\begin{figure}
  \centering
  \scalebox{0.6}{\begin{tikzpicture}[x=1cm,y=-1cm,
    >=stealth',
  block/.style={rounded corners,draw=black,fill=blue!30},
  dblock/.style={draw=black},
  fplus/.style={dashed,thick,<-,draw=red},
  fminus/.style={thick,<-,} ]
  \node[block,fill=green!30] at (9,0) (l00) {$\quad \, \LL{0}{1} \quad \, $};
  \node[block,fill=green!30] at (9,1) (l01) {$\quad \, \LL{0}{2} \quad \, $};
  \node[block,fill=green!30] at (9,2) (l02) {$\quad \, \LL{0}{3} \quad \, $};
  \node[block,fill=green!30] at (9,3) (l03) {$\quad \, \LL{0}{4} \quad \, $};
  \node[block,fill=green!30] at (9,4) (l04) {$\quad \, \LL{0}{5} \quad \, $};
  \node[block,fill=green!30] at (9,5) (l05) {$\quad \, \LL{0}{6} \quad \, $};
  \node[block,fill=green!30] at (9,6) (l06) {$\quad \, \LL{0}{7} \quad \, $};
  \node[block,fill=green!30] at (9,7) (l07) {$\quad \, \LL{0}{8} \quad \, $};
  
  \node[block,fill=green!30] at (6,0) (l10) {$\quad \, \LL{1}{1} \quad \, $};
  \node[block] at (6,1) (l11) {$\LL{1}{2} (\uu{1}{1})$};
  \node[block,fill=green!30] at (6,2) (l12) {$\quad \, \LL{1}{3} \quad \, $};
  \node[block] at (6,3) (l13) {$\LL{1}{4} (\uu{1}{3})$};
  \node[block,fill=green!30] at (6,4) (l14) {$\quad \, \LL{1}{5} \quad \, $};
  \node[block] at (6,5) (l15) {$\LL{1}{6} (\uu{1}{5})$};
  \node[block,fill=green!30] at (6,6) (l16) {$\quad \, \LL{1}{7} \quad \, $};
  \node[block] at (6,7) (l17) {$\LL{1}{8} (\uu{1}{7})$};
  
  \node[block,fill=green!30] at (3,0) (l20) {$\quad \, \LL{2}{1} \quad \, $};
  \node[block] at (3,1) (l21) {$\LL{2}{2} (\uu{2}{1})$};
  \node[block,fill=green!30] at (3,2) (l22) {$\quad \, \LL{2}{3} \quad \, $};
  \node[block] at (3,3) (l23) {$\LL{2}{4} (\uu{2}{3})$};
  \node[block,fill=green!30] at (3,4) (l24) {$\quad \, \LL{2}{5} \quad \, $};
  \node[block] at (3,5) (l25) {$\LL{2}{6} (\uu{2}{5})$};
  \node[block,fill=green!30] at (3,6) (l26) {$\quad \, \LL{2}{7} \quad \, $};
  \node[block] at (3,7) (l27) {$\LL{2}{8} (\uu{2}{7})$};

  \node[block,fill=green!30] at (0,0) (l30) {$\quad \, \LL{3}{1} \quad\, $};
  \node[block,fill=blue!30] at (0,1) (l31) {$\LL{3}{2} (\uu{3}{1})$};
  \node[block,fill=green!30] at (0,2) (l32) {$\quad \, \LL{3}{3}\quad\, $};
  \node[block,fill=blue!30] at (0,3) (l33) {$\LL{3}{4} (\uu{3}{3})$};
  \node[block,fill=green!30] at (0,4) (l34) {$\quad \, \LL{3}{5}\quad\, $};
  \node[block,fill=blue!30] at (0,5) (l35) {$\LL{3}{6} (\uu{3}{5})$};
  \node[block,fill=green!30] at (0,6) (l36) {$\quad \, \LL{3}{7}\quad\, $};
  \node[block,fill=blue!30] at (0,7) (l37) {$\LL{3}{8} (\uu{3}{7})$};

  \node[dblock,fill=black!30] at (-2,0) (d0) {$\hat{u}_{1}$};
  \node[dblock,fill=black!30] at (-2,1) (d1) {$\hat{u}_{2}$};
  \node[dblock,fill=black!30] at (-2,2) (d2) {$\hat{u}_{3}$};
  \node[dblock,fill=black!30] at (-2,3) (d3) {$\hat{u}_{4}$};
  \node[dblock,fill=black!30] at (-2,4) (d4) {$\hat{u}_{5}$};
  \node[dblock,fill=black!30] at (-2,5) (d5) {$\hat{u}_{6}$};
  \node[dblock,fill=black!30] at (-2,6) (d6) {$\hat{u}_{7}$};
  \node[dblock,fill=black!30] at (-2,7) (d7) {$\hat{u}_{8}$};

	\draw[fminus] (d0.east) -- (l30.west);
	\draw[fminus] (d0.east) -- (l30.west);
	\draw[fminus] (d1.east) -- (l31.west);
	\draw[fminus] (d2.east) -- (l32.west);
	\draw[fminus] (d3.east) -- (l33.west);
	\draw[fminus] (d4.east) -- (l34.west);
	\draw[fminus] (d5.east) -- (l35.west);
	\draw[fminus] (d6.east) -- (l36.west);
	\draw[fminus] (d7.east) -- (l37.west);

  \draw[fminus] (l30.east) -- (l20.west);
  \draw[fminus] (l30.east) -- (l24.west);
  %
  \draw[fminus] (l20.east) -- (l10.west);
  \draw[fminus] (l20.east) -- (l12.west);
  \draw[fminus] (l24.east) -- (l14.west);
  \draw[fminus] (l24.east) -- (l16.west);
  \draw[fminus] (l10.east) -- (l00.west);
  \draw[fminus] (l10.east) -- (l01.west);
  \draw[fminus] (l12.east) -- (l02.west);
  \draw[fminus] (l12.east) -- (l03.west);
  \draw[fminus] (l14.east) -- (l04.west);
  \draw[fminus] (l14.east) -- (l05.west);
  \draw[fminus] (l16.east) -- (l06.west);
  \draw[fminus] (l16.east) -- (l07.west);
  \draw[fminus] (l31.east) -- (l20.west);
  \draw[fminus] (l31.east) -- (l24.west);
  \draw[fminus] (l32.east) -- (l21.west);
  \draw[fminus] (l32.east) -- (l25.west);
  \draw[fminus] (l21.east) -- (l10.west);
  \draw[fminus] (l21.east) -- (l12.west);
  \draw[fminus] (l25.east) -- (l14.west);
  \draw[fminus] (l25.east) -- (l16.west);
  \draw[fminus] (l33.east) -- (l21.west);
  \draw[fminus] (l33.east) -- (l25.west);
  \draw[fminus] (l34.east) -- (l22.west);
  \draw[fminus] (l34.east) -- (l26.west);
  \draw[fminus] (l22.east) -- (l11.west);
  \draw[fminus] (l22.east) -- (l13.west);
  \draw[fminus] (l26.east) -- (l15.west);
  \draw[fminus] (l26.east) -- (l17.west);
  \draw[fminus] (l11.east) -- (l00.west);
  \draw[fminus] (l11.east) -- (l01.west);
  \draw[fminus] (l13.east) -- (l02.west);
  \draw[fminus] (l13.east) -- (l03.west);
  \draw[fminus] (l15.east) -- (l04.west);
  \draw[fminus] (l15.east) -- (l05.west);
  \draw[fminus] (l17.east) -- (l06.west);
  \draw[fminus] (l17.east) -- (l07.west);
  \draw[fminus] (l35.east) -- (l22.west);
  \draw[fminus] (l35.east) -- (l26.west);
  \draw[fminus] (l36.east) -- (l23.west);
  \draw[fminus] (l36.east) -- (l27.west);
  \draw[fminus] (l23.east) -- (l11.west);
  \draw[fminus] (l23.east) -- (l13.west);
  \draw[fminus] (l27.east) -- (l15.west);
  \draw[fminus] (l27.east) -- (l17.west);
  \draw[fminus] (l37.east) -- (l23.west);
  \draw[fminus] (l37.east) -- (l27.west);

  \draw[decorate,decoration={brace,amplitude=10pt}] (10,0) -- (10,7)
  node[black,midway,rotate=270,yshift=2em]{Channel LLRs -- stage $s=0$};
  
\end{tikzpicture}}
  \caption{The computation graph of the SC decoder for $N=8$. The $f$ nodes are green and $g$ nodes are blue and in the parentheses are the partial sums that are used by each $g$ node.}
  \label{fig:scbutterfly}
\end{figure}
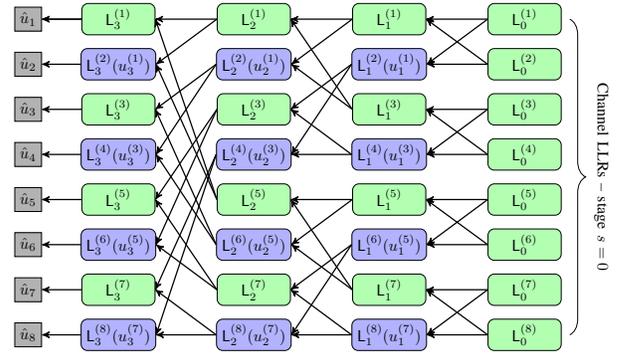

\section{Error Propagation in SC Decoding}\label{sec:errorprop}
In SC decoding, erroneous bit decisions can be caused by channel noise \emph{or} by error propagation due to previous erroneous bit decisions. The first erroneous decision is always caused by the channel noise since there are no previous errors, so error propagation does not affect the frame error rate of polar codes, but only the bit error rate. 


\subsection{Effect of Error Propagation}
The erroneous decisions due to error propagation are caused by erroneous decision feedback, which in turns leads to erroneous partial sums.  Erroneous partial sums can corrupt the output LLR values at all stages, including, most importantly, the decision LLRs at level $n$. 

\begin{figure}[t]
	\centering
	\includegraphics[width=0.42\textwidth]{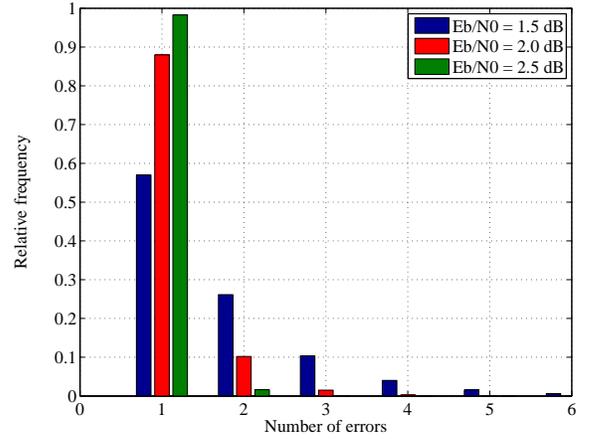}
	\caption{Histogram showing the relative frequency of the number of errors caused by the channel for a polar code with $N = 1024$ and $R = 0.5$ for three different SNR values.}\label{fig:ErrorsSNR}
\end{figure}

\begin{figure}[t]
	\centering
	\includegraphics[width=0.42\textwidth]{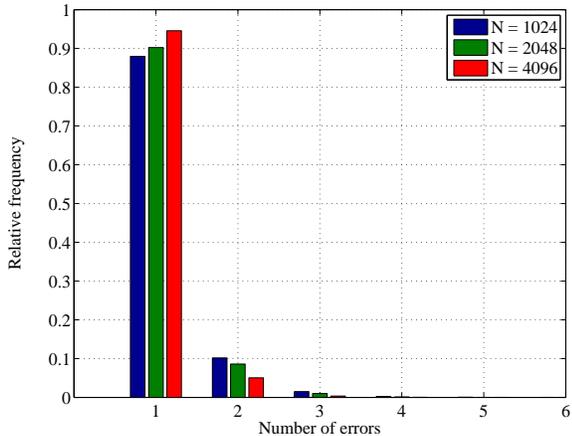}
	\caption{Histogram showing the relative frequency of the number of errors actually caused by the channel for Eb/N0 = 2.00 and three different codelengths, $ N = 1024, 2048, 4096 $.}\label{fig:ErrorsN}
	\vspace{-0.2cm}
\end{figure}

For example, assume that, for the polar code in Fig.~\ref{fig:scbutterfly}, the frozen set is $\mathcal{A}^c = \{1,2,5,6\}$ and the information set is $\mathcal{A} = \{3,4,7,8\}$. Moreover, assume that the all-zero codeword was transmitted and that $\hat{u}_3$ was erroneously decoded as $\hat{u}_3 = 1$ due to channel noise. Now suppose that the two LLRs that are used to calculate the next decision LLR (i.e., $\LL{3}{4}$), namely, $\LL{2}{2}$ and $\LL{2}{6}$, are both positive and $\LL{2}{2} > \LL{2}{6}$. By applying the $g$ node update rule with $u = \uu{3}{3} = \hat{u}_3 = 1$, the resulting decision LLR $\LL{3}{4} = \LL{2}{6}-\LL{2}{2}$ has a negative value which leads to a second erroneous decision, while with the correct partial sum $u = 0$ the decision would have been correct.


\subsection{Significance of Error Propagation}
The foregoing analysis of the effects of error propagation insinuates the following question: Given that we had an erroneously decoded codeword with many erroneous bits, how many of these bits were actually wrong because of channel noise rather than due to previous erroneous decisions? In order to answer to this question, we employ an oracle-assisted SC decoder. Each time an error occurs at the decision level, the oracle corrects it instantaneously without allowing it to affect any future bit decisions. Moreover, the oracle-assisted SC decoder counts the number of times it had to correct an erroneous decision.

In Fig.~\ref{fig:ErrorsSNR} we plot a histogram of the number of errors caused by channel noise (given that there was at least one error) for three different Eb/N0 values for a polar code with $N = 1024$ and $R = 0.5$ over an AWGN channel. We observe that most frequently the channel introduces only one error and that this behavior becomes even more prominent for increasing Eb/N0 values. In Fig.~\ref{fig:ErrorsN} we plot a histogram of the number of errors caused by channel noise for polar codes with three different blocklengths and $R = 0.5$ over an AWGN channel at $\text{Eb/N0}=2~\text{dB}$. We observe that, the relative frequency of the single error event increases with increasing blocklengths. This happens because, as $N$ gets larger, the synthetic channels $W_n^{(i)}(y_1^N,u_1^{i-1}|u_i)$ become more polarized, meaning that all information channels in $\calA$ become better.

\subsection{Oracle-Assisted SC Decoder}\label{sec:scoracle}
From the discussion in the previous section, it is clear that, by identifying the position of the first erroneous bit decision and inverting that decision, the performance of the SC decoder could be improved significantly. In order to examine the potential benefits of correcting a single error we employ a second oracle-assisted SC decoder, which is only allowed to intervene \emph{once} in the decoding process in order to correct the first erroneous bit decision.

In Fig.~\ref{fig:SC_Oracle_3N} we compare the performance of the SC decoder with that of the oracle-assisted SC decoder for a polar code of three blocklengths and $R = 0.5$ over an AWGN channel. We observe that correcting a single erroneous bit decision significantly improves the performance of the SC decoder.
\begin{figure}[t]
	\centering
	\includegraphics[width=0.42\textwidth]{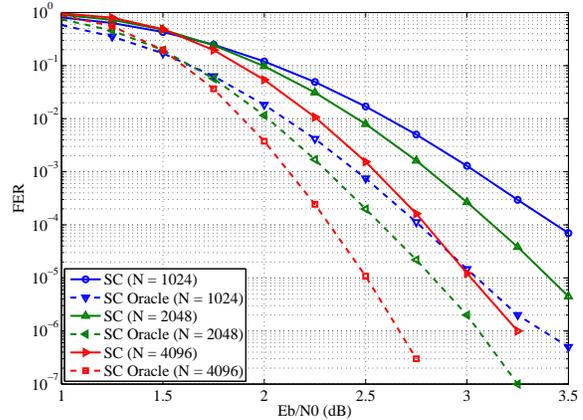}
	\caption{Performance of oracle-assisted SC decoder compared to the SC decoder for $ N = 1024, 2048, 4096 $ and $R=0.5$.}\label{fig:SC_Oracle_3N}
	\vspace{-0.2cm}
\end{figure}


\section{SC Flip Decoding}\label{sec:scflip}

The goal of SC flip decoding is to identify the first error that occurs during SC decoding without the aid of an oracle.

\subsection{SC Flip Decoding Algorithm}
Assume that we are given a polar code of rate $\tilde{R} = \frac{k}{N}$ with a set of information bits $\tilde{\mathcal{A}}$. We use an $r$-bit CRC that tells us, with high probability, whether the codeword estimate $\hat{u}_1^N$ given by the SC decoder is a valid codeword or not.  In order to incorporate the CRC, the rate of the polar code is increased to $R = \tilde{R} + \frac{r}{N} = \frac{k+r}{N}$, so that the effective information rate remains unaltered. Equivalently, the set of information bits $\tilde{\mathcal{A}}$ is extended with the $r$ most reliable channel indices in $\tilde{\mathcal{A}}^\text{c}$, denoted by $\tilde{\mathcal{A}}^\text{c}_{\text{$r$--$\max$}}$ Thus, $\mathcal{A} = \tilde{\mathcal{A}} \cup \tilde{\mathcal{A}}^\text{c}_{\text{$r$--$\max$}}$.

The SC flip decoder starts by performing standard SC decoding in order to produce a first estimated codeword $\hat{u}_1^N$. If $\hat{u}_1^N$ passes the CRC, then decoding is completed. If the CRC fails, the SC flip algorithm is given $T$ additional attempts to identify the first error that occurred in the codeword. To this end, let $\mathcal{U}$ denote the set of the $T$ least reliable decisions, i.e., the set containing the indices $i \in \mathcal{A}$ corresponding to the $T$ smallest $|L_n^{(i)}(y_1^N,\hat{u}_1^{i-1}|u_i)|$ values. After the set $\mathcal{U}$ has been constructed, SC decoding is restarted for a total of no more than $T$ additional attempts. In each attempt, a single $\hat{u}_k, k \in \mathcal{U},$ is flipped with respect to the initial decision of the SC algorithm. The algorithm terminates when a valid codeword has been found or when all $T$ additional attempts have failed. Note that, for $T = 0$, SC flip decoding is equivalent to SC decoding. 

The SC flip algorithm is formalized in the \textsc{SCFlip}$(y_1^N,\mathcal{A},k)$ function in Fig.~\ref{fig:scflip}. The \textsc{SC}$(y_1^N,\mathcal{A},k)$ function performs SC decoding based on the channel output $y_1^N$ and the set of non-frozen bits $\mathcal{A}$ with a slight twist: when $k > 0$, the codeword bit $u_k$ is decoded by flipping the value obtained from the decoding rule~\eqref{eq:scdec}. 

Note that SC flip decoding is similar to chase decoding for polar codes~\cite{Sarkis2013}. The main differences are that SC flip decoding only considers error patterns containing a single error and that these error patterns are not generated offline using the a-priori reliabilities of the synthetic channels $W_n^{(i)}(y_1^N,u_1^{i-1}|u_i)$, but online using the decision LLRs $\LL{n}{i}$, which reflect the actual reliabilities of the bit decisions for each transmitted codeword and channel noise realization.

\begin{figure}
\centering
\algsetup{indent=1.5em}
\begin{algorithmic}[1]
	\STATE \textbf{function} \textsc{SCFlip}$(y_1^N,\mathcal{A},T)$
	\STATE $\left(\hat{u}_1^N,L(y_1^N,\hat{u}_1^{i-1}|u_i)\right) \leftarrow \textsc{SC}(y_1^N,\mathcal{A},0)$; \label{alg:SCinit}
	\IF{$T > 1$ and \textsc{CRC}$(\hat{u}_1^N) = $ failure}  \label{alg:CRCinit}
		\STATE $\mathcal{U} \leftarrow$ $i \in \mathcal{A}$ of $T$ smallest $|L(y_1^N,\hat{u}_1^{i-1}|u_i)|$; \label{alg:sorting}
		\FOR{$j \leftarrow 1$ \TO $T$} \label{alg:forbegin}
			\STATE $k \leftarrow \mathcal{U}(j)$;
			\STATE $\hat{u}_1^N \leftarrow \textsc{SC}(y_1^N,\mathcal{A},k)$; \label{alg:SCrestart}
			\IF{\textsc{CRC}$(\hat{u}_1^N) = $ success} \label{alg:CRCrestart}
				\STATE break;
			\ENDIF
		\ENDFOR \label{alg:forend}
	\ENDIF
	\RETURN $\hat{u}_1^N$;
\end{algorithmic}
\caption{SC flip decoding with maximum trials $T$.}\label{fig:scflip}
\end{figure}

\subsection{Complexity of SC Flip Decoding}\label{sec:complexity}
In this section, we derive the worst-case and average-case computational complexities of the SC flip algorithm, as well as its memory complexity.

\begin{proposition}
The worst-case computational complexity of the \textsc{SCFlip} algorithm defined in Fig.~\ref{fig:scflip} is $O(TN\log N)$. \label{prop:worstcase}
\end{proposition}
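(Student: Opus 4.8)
The plan is to count, in the worst case, how many times the \textsc{SC} subroutine is invoked inside \textsc{SCFlip}, and multiply by the per-call cost. The key observation is that \textsc{SCFlip} consists of one initial call to \textsc{SC} on line~\ref{alg:SCinit}, followed — only if the CRC fails — by at most $T$ further calls to \textsc{SC} inside the \texttt{for} loop on lines~\ref{alg:forbegin}--\ref{alg:forend} (one per value of $j$, since the loop may terminate early via \texttt{break} but never runs more than $T$ iterations). Hence the total number of \textsc{SC} invocations is at most $T+1$.

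Next I would invoke the complexity of a single \textsc{SC} call. As recalled in Section~\ref{sec:polarsc}, the SC computation graph has $N\log(N+1)$ nodes, each activated exactly once, so standard SC decoding runs in $O(N\log N)$ time; the ``slight twist'' of flipping bit $u_k$ when $k>0$ changes only a single decision and adds $O(1)$ work, so the modified \textsc{SC}$(y_1^N,\mathcal{A},k)$ also costs $O(N\log N)$. The auxiliary operations are lower-order: each \textsc{CRC} check on lines~\ref{alg:CRCinit} and~\ref{alg:CRCrestart} costs $O(N)$ (or $O(rN)$, still dominated), and constructing $\mathcal{U}$ on line~\ref{alg:sorting} requires selecting the $T$ smallest magnitudes among $|\mathcal{A}|\le N$ values, which is $O(N\log N)$ (or $O(N)$ with a selection algorithm) and is performed only once.

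Combining these, the worst-case running time is at most $(T+1)\cdot O(N\log N) + O(N\log N) + O(N) = O(TN\log N)$, which is the claimed bound. I would also note that this worst case is attained: if the CRC fails on the initial estimate and on all $T$ flipped estimates, the loop runs its full $T$ iterations, so the bound is tight up to constants.

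The only mild subtlety — and the one point worth stating carefully rather than the ``main obstacle,'' since there is no real obstacle here — is the accounting for line~\ref{alg:SCrestart}: each restart re-runs \textsc{SC} from scratch rather than reusing any intermediate LLRs from previous attempts, so the $T$ restarts genuinely cost $T$ times the full SC complexity and cannot be amortized down in the worst case. Everything else is a routine tally of the pseudocode in Fig.~\ref{fig:scflip} against the per-node cost of the computation graph in Fig.~\ref{fig:scbutterfly}.
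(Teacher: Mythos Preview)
Your proposal is correct and follows essentially the same approach as the paper: a line-by-line tally of the pseudocode in Fig.~\ref{fig:scflip}, charging $O(N\log N)$ to each \textsc{SC} call (initial plus at most $T$ restarts), $O(N)$ to the CRC checks, and $O(N\log N)$ to the sorting/selection step, and summing to $O(TN\log N)$. Your additional remarks on tightness and on the impossibility of amortizing the restarts are accurate but go slightly beyond what the paper states.
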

\begin{proof}
SC decoding in line~\ref{alg:SCinit} has complexity $O(N \log N)$ and the computation of the CRC in line~\ref{alg:CRCinit} has complexity $O(N)$. Moreover, the sorting step in line~\ref{alg:sorting} can be implemented with complexity $O(N \log N)$ (e.g., using merge sort). Finally, the operations in the loop (lines~\ref{alg:forbegin}--\ref{alg:forend}) have complexity $O(N \log N)$ and the loop runs $T$ times in the worst case. Thus, the overall worst-case complexity is $O(TN \log N)$.
\end{proof}
Proposition~\ref{prop:worstcase} shows that, in the worst case, the complexity of our algorithm increases linearly with the parameter $T$, meaning that its complexity scaling is no better than that of SC list decoding. However, if we consider the \emph{average} complexity, then the situation is much more favorable, as the following result shows.

\begin{proposition}\label{prop:avgcompl}
Let $P_e(R,\text{SNR})$ denote the frame error rate of a polar code of rate $R$ at the given SNR point. Then, the average-case computational complexity of the \textsc{SCFlip} algorithm defined in Fig.~\ref{fig:scflip} is $O(N\log N(1+T \cdot P_e(R,\text{SNR})))$, where $R = \frac{k+r}{N}$. \label{prop:averagecase}
\end{proposition}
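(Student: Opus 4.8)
The plan is to compute the expected running time by the law of total expectation, conditioning on the single event that matters: whether the initial SC pass in line~\ref{alg:SCinit} produces a codeword that passes the CRC in line~\ref{alg:CRCinit}. First I would note that lines~\ref{alg:SCinit} and~\ref{alg:CRCinit} are executed unconditionally, contributing a deterministic cost of $O(N\log N) + O(N) = O(N\log N)$, using the per-line complexities already established in the proof of Proposition~\ref{prop:worstcase}. Everything inside the \texttt{if} block runs only when the CRC reports failure, so its contribution to the expected cost is (at most) its worst-case cost times the probability of that event.

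Next I would bound the cost of the \texttt{if} block exactly as in Proposition~\ref{prop:worstcase}, minus the initial SC pass: the sorting in line~\ref{alg:sorting} is $O(N\log N)$, and the loop in lines~\ref{alg:forbegin}--\ref{alg:forend} executes at most $T$ times, each iteration performing one SC decoding and one CRC evaluation at cost $O(N\log N)$, for a total of $O(TN\log N)$. The early \texttt{break} can only reduce this, so $O(TN\log N)$ is a legitimate upper bound on the conditional cost of the block irrespective of how many trials are actually used.

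The key step is to identify the probability of entering the \texttt{if} block with $P_e(R,\text{SNR})$. Here I would invoke the construction from Section~\ref{sec:scflip}: the CRC is incorporated by promoting the polar code to rate $R = \frac{k+r}{N}$, and the first decoding attempt is precisely ordinary SC decoding of this rate-$R$ polar code. Hence $\hat u_1^N$ equals the transmitted codeword with probability $1 - P_e(R,\text{SNR})$, in which case the CRC passes and the \texttt{if} block is skipped; therefore $\Pr[\text{CRC fails}] \le P_e(R,\text{SNR})$. Combining, the expected cost is $O(N\log N) + P_e(R,\text{SNR})\cdot O(TN\log N) = O\bigl(N\log N(1 + T\cdot P_e(R,\text{SNR}))\bigr)$, as claimed.

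I expect the only delicate point to be the clean treatment of the CRC: the event ``CRC fails'' is not literally identical to ``SC erred,'' since an $r$-bit CRC can miss an error pattern. The argument needs the standard caveat, consistent with the paper's own phrasing that the CRC detects errors ``with high probability,'' that this discrepancy is $O(2^{-r})$ and does not affect the order estimate. Everything else is routine accounting of per-line costs weighted by this single probability, so I would keep that part brief.
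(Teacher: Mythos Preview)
Your proposal is correct and follows essentially the same approach as the paper's own proof, which is a single sentence observing that the loop in lines~\ref{alg:forbegin}--\ref{alg:forend} runs only when SC decoding fails and the CRC detects the failure, an event of probability at most $P_e(R,\text{SNR})$. Your version merely makes the per-line accounting explicit; the CRC caveat you flag is in fact unnecessary for the upper bound, since a CRC miss can only \emph{decrease} the probability of entering the \texttt{if} block, so $\Pr[\text{CRC fails}] \le \Pr[\text{SC fails}] = P_e(R,\text{SNR})$ holds without any $2^{-r}$ correction.
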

\begin{proof}
It suffices to observe that the loop in lines~\ref{alg:forbegin}--\ref{alg:forend} runs only if SC decoding fails and the CRC detects the failure, which happens with probability at most $P_e(R,\text{SNR})$.
\end{proof}
As the SNR increases, the FER drops asymptotically to zero. Thus, for high SNR the average computational complexity of SC flip decoding converges to the computational complexity of SC decoding. In other words, SC flip exhibits an \emph{energy-proportional} behavior where more energy is spent when the problem is difficult (i.e., at low SNR) and less energy is spent when the problem is easy (i.e., at high SNR).

\begin{proposition}
The \textsc{SCFlip} algorithm defined in Fig.~\ref{fig:scflip} requires $O(N)$ memory positions.
\end{proposition}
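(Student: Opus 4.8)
The plan is to decompose the \textsc{SCFlip} procedure of Fig.~\ref{fig:scflip} line by line, bound the memory footprint of each component, and then argue that the memory used by the successive SC re-runs inside the loop is \emph{reused} rather than accumulated. First I would recall that a single invocation of the SC decoder can be implemented with $O(N)$ memory positions~\cite{Leroux2011}, and note that the same storage suffices to hold the estimate $\hat{u}_1^N$; with an additional $O(|\mathcal{A}|) = O(N)$ positions one also stores the decision LLRs $L_n^{(i)}(y_1^N,\hat{u}_1^{i-1}|u_i)$ for $i \in \mathcal{A}$ that are produced in line~\ref{alg:SCinit}. Hence, immediately after line~\ref{alg:SCinit}, the decoder holds $O(N)$ values.

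Next I would account for the remaining steps. The CRC check in lines~\ref{alg:CRCinit} and~\ref{alg:CRCrestart} processes the stored vector $\hat{u}_1^N$ using only $O(r) = O(N)$ working memory (in fact $O(1)$ registers for a serial CRC, and $r < N$ since $R < 1$). The construction of $\mathcal{U}$ in line~\ref{alg:sorting} amounts to sorting the already-stored $|\mathcal{A}|$ LLR magnitudes, which can be carried out with $O(N)$ auxiliary memory (e.g., merge sort); and since $\mathcal{U} \subseteq \mathcal{A}$ we have $|\mathcal{U}| = T \le |\mathcal{A}| < N$, so storing the indices in $\mathcal{U}$ also costs $O(N)$. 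The loop counter $j$, the flip index $k$, and a copy of the current candidate codeword each cost $O(N)$ or less.

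The one point that requires more than a routine tally — and the step I expect to be the main obstacle — is to make explicit that each re-run of SC in line~\ref{alg:SCrestart} can overwrite the same $O(N)$ memory block used by the initial run in line~\ref{alg:SCinit}: the only data that must persist across iterations are the set $\mathcal{U}$, the loop index, and the (at most one) valid codeword found so far, all of size $O(N)$. Thus the loop contributes $O(N)$ memory rather than $O(TN)$, in contrast to its contribution to the worst-case \emph{time} complexity in Proposition~\ref{prop:worstcase}. Summing the $O(N)$ contributions of the finitely many components then yields a total of $O(N)$ memory positions, as claimed.
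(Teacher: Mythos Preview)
Your proposal is correct and follows essentially the same line-by-line decomposition as the paper's own proof: bound the SC memory by $O(N)$, the CRC and sorting (via merge sort) each by $O(N)$, the index set $\mathcal{U}$ by $T \le N$, and then observe that the SC re-runs in line~\ref{alg:SCrestart} reuse the memory of the initial SC call rather than accumulating $O(TN)$. Your version is slightly more explicit about storing the decision LLRs and loop bookkeeping, but the structure and the key memory-reuse observation are identical.
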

\begin{proof}
SC decoding in line~\ref{alg:SCinit} requires $O(N)$ memory positions. The storage of the CRC calculated in lines~\ref{alg:CRCinit} and \ref{alg:CRCrestart} requires exactly $C$ memory positions, where $C \leq N$. The sorting step in line~\ref{alg:sorting} can be implemented with $O(N)$ memory positions (e.g., using merge sort), while storing the $T$ smallest values requires exactly $T$ memory positions, where $T \leq N$. Moreover, the SC decoding performed in line~\ref{alg:SCrestart} can re-use the memory positions of the SC decoding in line~\ref{alg:SCinit}, so no additional memory is required. Thus, the overall memory scaling behavior is $O(N)$.
\end{proof}

\subsection{Error Correcting Performance of SC Flip Decoding}

\begin{figure}[t]
	\centering
	\includegraphics[width=0.42\textwidth]{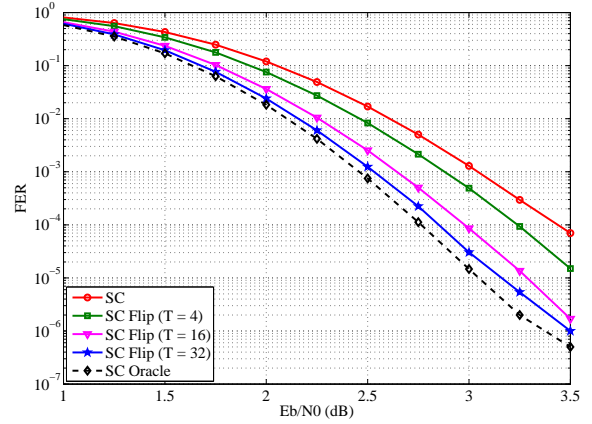}
	\caption{Frame error rate of SC decoding, SC flip decoding with $T=4,16, 32$ and the oracle-assisted SC decoder for a polar code of length $N=1024$ and $R=0.5$.}\label{fig:Flip1024}
\end{figure}

\begin{figure}[t]
	\centering
	\includegraphics[width=0.42\textwidth]{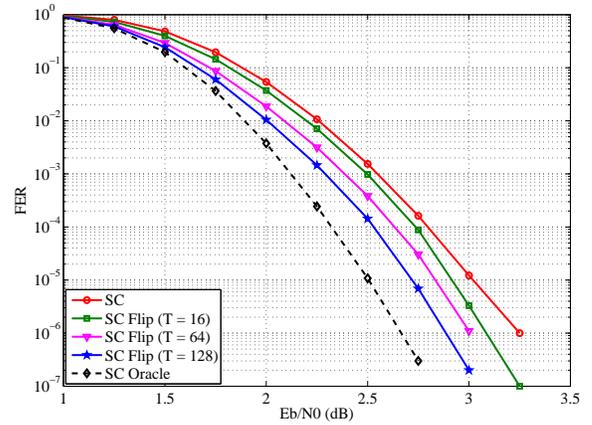}
	\caption{Frame error rate of SC decoding, SC flip decoding with $T=4,16, 32$ and the oracle-assisted SC decoder for a polar code of length $N=4096$ and $R=0.5$.}\label{fig:Flip4096}
	\vspace{-0.3cm}
\end{figure}

In Fig.~\ref{fig:Flip1024} we compare the performance of the SC flip decoder with $T=4,16,32,$ and a 16-bit CRC with the SC decoder and the oracle-assisted SC decoder described in Section~\ref{sec:scoracle}. Note that the oracle-assisted decoder characterizes a performance bound for the SC flip decoder. We observe that SC flip decoding with $T=4$ already leads to a gain of one order of magnitude in terms of FER at Eb/N0 = 3.5 dB. With $T=32$, we can reap all the benefits of the oracle-assisted SC decoder, since the $T=32$ curve is shifted to the right with respect to the oracle-assisted curve by an amount that corresponds exactly to the rate loss incurred by the $16$-bit CRC.

In Fig.~\ref{fig:Flip4096} we depict the same curves for a codelength $N = 4096$, while keeping the ratio $\frac{T}{N}$ constant. We observe that it seems to become more difficult to reach the bound performance of the oracle-assisted SC decoder. As $N$ increases, the channels get more polarized, which would suggest the opposite behavior. However, at the same time, the absolute number of the possible positions for the first error increases as well. Our results suggest that the aforementioned negative effect negates the positive effect of channel polarization.

In Fig.~\ref{fig:FlipList}, we compare the performance of standard SC decoding, SC flip decoding, and SC list decoding. 
We observe that the performance of the SC flip decoder with $T = 32$ is almost identical to that of the SC list decoder with $L = 2$, but with half the computational complexity at high Eb/N0 values and half the memory complexity at all Eb/N0 values. For higher list sizes, such as $L=4$, SC list decoding outperforms SC flip decoding, at the cost of significantly higher complexity, since the performance of SC flip decoding is limited by the fact that it can only correct a single error.

\subsection{Average Computational Complexity of SC Flip Decoding}

In Fig. \ref{fig:complexity}, we compare the average computational complexity of standard SC decoding, SC list decoding, and SC flip decoding. We observe that, as predicted by Proposition~\ref{prop:avgcompl}, at low SNR the average computational complexity of SC flip decoding is $(T+1)$ times larger than that of SC decoding but at higher SNR the computational complexity is practically identical to that of SC decoding. Moreover, the energy-proportional behavior of SC flip decoding is evident since, contrary to SC list decoding, the computational complexity decreases rapidly with decreasing difficulty of the decoding problem (i.e., increasing SNR). We also emphasize that SC flip decoding is not a viable option for the low SNR region, but this a not a region of interest for practical systems because the FER is very high.

\begin{figure}[t]
	\centering
	\includegraphics[width=0.42\textwidth]{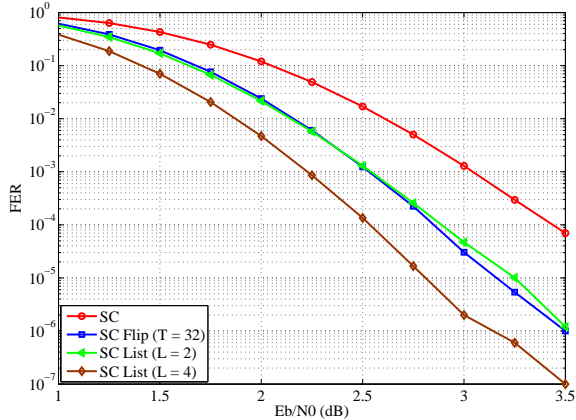}
	\caption{Frame error rate of SC decoding, SC flip decoding with $T=32$ and SC list decoding with $L=2, 4$ for a polar code of length $N=1024$ and $R=0.5$.}\label{fig:FlipList}
\end{figure}


\begin{figure}[t]
	\centering
	\includegraphics[width=0.42\textwidth]{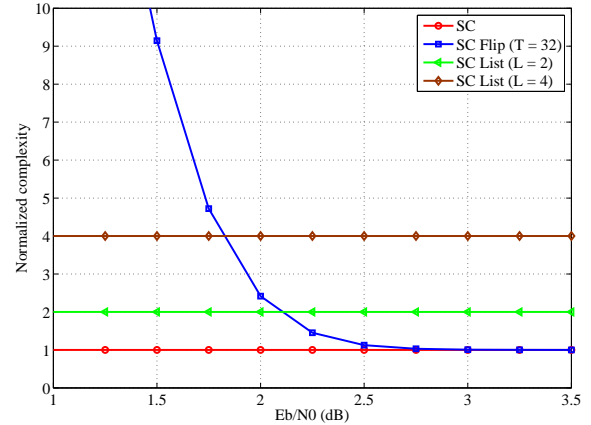}
	\caption{Average complexity of SC flip decoding normalized with respect to the complexity of SC decoding for a polar code of length $N=1024$ and $R=0.5$.}\label{fig:complexity}
\end{figure}

\section{Conclusion}\label{sec:concl}
In this paper we have introduced \emph{successive cancellation flip} decoding for polar codes. This algorithm improves the frame error rate performance by opportunistically retrying alternative decisions for bits that turned out to be unreliable in a failing initial decoding iteration. By exploring alternative passes in the decoding tree one after another until a correct codeword is found, the average complexity and memory requirements are kept low, while approaching the performance of more complex tree-search based decoders.

%
%
%
%
\ifCLASSOPTIONcaptionsoff
  \newpage
\fi

\end{document}